\tikzstyle{vertex}=[circle, fill, inner sep=0pt, minimum size=3pt]
\newcommand{\vertex}{\node[vertex]}
\tikzset{snake it/.style={decorate, decoration=snake}}
\spnewtheorem{observation}[theorem]{Observation}{\bfseries}{\itshape}
\spnewtheorem*{remarknonum}{Remark}{\itshape}{}
\let\given\givenbase
\def\prob{\ensuremath\mathbb{P}}
\def\expect{\ensuremath\mathbb{E}}
\newcommand*\dd{\mathop{}\!\mathrm{d}}
\let\originalleft\left
\let\originalright\right
\renewcommand{\left}{\mathopen{}\mathclose\bgroup\originalleft}
\renewcommand{\right}{\aftergroup\egroup\originalright}
\begin{document}
\title{Approximation Ineffectiveness of a Tour-Untangling Heuristic\thanks{Supported by NWO grant OCENW.KLEIN.176.}} 
%
%
\author{Bodo Manthey \and Jesse van Rhijn \Letter}

\authorrunning{B. Manthey \and J. van Rhijn}

\institute{University of Twente, Enschede, The Netherlands
\email{\{b.manthey,j.vanrhijn\}@utwente.nl}}
\maketitle              
\begin{abstract}
    We analyze a tour-uncrossing heuristic for the Euclidean Travelling Salesperson Problem,
    showing that its worst-case approximation
    ratio is $\Omega(n)$ and its average-case approximation ratio is $\Omega(\sqrt{n})$
    in expectation.
    We furthermore
    evaluate the approximation performance
    of this heuristic numerically on average-case instances,
    and find that it performs far better than the average-case lower bound suggests.
    This indicates a shortcoming in the approach we use for our analysis, which
    is a rather common method in the analysis of local search heuristics.
\keywords{Travelling salesperson problem \and Local search \and Probabilistic analysis.}
\end{abstract}

\section{Introduction}

The Travelling Salesperson Problem (TSP)
is a classic example of an NP-hard combinatorial
optimization problem \cite{korteCombinatorialOptimizationTheory2000}.
Different variants of the problem exist, with one of the most
studied variants being the Euclidean TSP. In this version, the weight of an edge
is given by the Euclidean distance between its endpoints.
Even this restricted version is NP-hard \cite{papadimitriouEuclideanTravellingSalesman1977}.

Due to this hardness, practitioners often turn to approximation algorithms and heuristics for
the TSP. One simple heuristic is 2-opt \cite{aartsLocalSearchCombinatorial2003}. In each iteration of this heuristic,
one searches for a pair of edges in the tour that can be replaced by a different pair, such
that the total length of the tour decreases. Although this
heuristic performs quite well in practice \cite[Chapter 8]{aartsLocalSearchCombinatorial2003},
it may require an exponential number of iterations to converge even in the plane
\cite{englertSmoothedAnalysis2Opt2016}.

Interestingly, 
Van Leeuwen \& Schoone showed that a restricted variant of 2-opt
in which one only removes intersecting edges terminates in $O(n^3)$ iterations
in the worst case \cite{vanleeuwenUntanglingTravelingSalesman1980}.
For convenience, we refer to this variant as X-opt.
More recently, da Fonseca et al.\ \cite{dafonsecaLongestFlipSequence2022} analyzed this heuristic once more,
extending the results to matching problems and showing a bound
of $O(tn^2)$ for instances where all but $t$ points are in convex position.
Their work builds on previous related work
on computing uncrossing matchings \cite{biniazFlipDistancePlane2019}.

The insight that removing intersecting edges improves the tour is a key
intuition behind 2-opt. However, not all 2-opt iterations remove intersections.
Indeed, 2-opt has proved extremely effective also for non-metric TSP instances,
where there is no notion of intersecting edges at all.

This raises the question of approximation performance: can one
get away with using X-opt instead of 2-opt at minimal cost to the approximation
guarantee, thereby ensuring an efficient heuristic for TSP instances in the plane?
The approximation ratio of 2-opt has long been known to sit
between $\Omega\left(\frac{\log n}{\log\log n}\right)$ and $O(\log n)$ for
$d$-dimensional Euclidean instances \cite{chandraNewResultsOld1999},
and has recently been settled to $\Theta\left(\frac{\log n}{\log \log n}\right)$
for 2-dimensional instances \cite{brodowskyApproximationRatioKOpt2021}.
However, no previous work seems to have discussed X-opt.

We analyze this simpler case here, showing an approximation ratio of $\Omega(n)$ in the worst case and
$\Omega(\sqrt{n})$ in the average case. This answers our previously raised question in the negative; in order
to obtain a good approximation ratio, one must allow for iterations that improve the tour
without removing intersections. 
Especially the average-case result stands in stark contrast to the average-case approximation ratio
of 2-opt, which is known to be $O(1)$ \cite{chandraNewResultsOld1999}. 

We also perform a numerical experiment, which presents a different picture
from our formal results. To within the precision we
are able to achieve, our experiments indicate an average-case approximation ratio of X-opt
of $O(1)$. We consider this evidence that the techniques we use
to obtain the average-case bound of
$\Omega(\sqrt{n})$, which are standard techniques used to perform probabilistic analyses of local search
heuristics, fall short of explaining the true practical performance
of X-opt.

\subsection{Definitions and Notation}

Given two points $x, y \in \mathbb{R}^2$, we define $d(x, y)$
as the Euclidean distance between $x$ and $y$. We define
$L(x, y)$ as the line segment between $x$ and $y$.
By an abuse of notation, if $e = \{x, y\}$ with $x \neq y$, we write
$L(e) = L(x, y)$. We write $\ell(e) = d(x, y)$ for the
length of $L(e)$.

Let $X \subseteq \mathbb{R}^2$. For a set $E \subseteq \{e \in 2^X \given |e| = 2\}$,
we write $\ell(E) = \sum_{e \in E}\ell(e)$. If
$L(e) \cap L(f) \neq \emptyset$ for some $e, f \in E$, i.e., some of the line segments
represented by the edges
in $E$ intersect, then we say $E$ is crossing. Conversely,
if $E$ is not crossing, then it is noncrossing.
In particular, a local optimum for X-opt is exactly a
noncrossing tour.

Given $A \subset \mathbb{R}^2$ and $x \in \mathbb{R}^2$, we define
the distance between $A$ and $x$ as $d(x, A) = \min_{y \in A} d(x, y)$.
Note that $d(x, A)$ might not exist, for instance, if $A$
is an open set. However, we will only consider sets for which $d(x, A)$ is
well-defined.

Let $A$ be a rectangular region in $\mathbb{R}^2$. Let $X$ be a set of $n$
points in $A$. We call the four line segments that make up the boundary
of $A$ the edges of $A$.
For an edge $e$ of $A$, let $x_e = \arg\min_{x \in X} d(x, e)$
be the point closest to $e$.
We call $X$ nice for $A$
if for each pair of edges $e$, $f$ of $A$, it holds that
$x_e \neq x_f$.

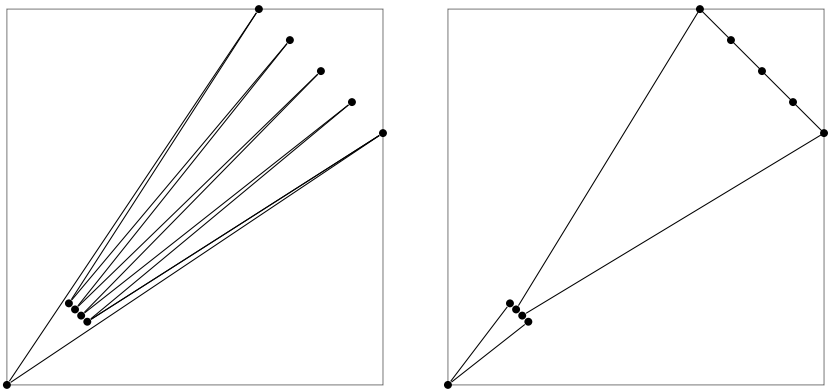
\begin{figure}
    \centering
    \begin{tikzpicture}[scale=5]
        \pgfmathsetmacro\m{5};
        \pgfmathsetmacro\eps{0.33};
        \pgfmathsetmacro\eta{\eps/5};
        \draw[gray] (0,0) -- (1,0) -- (1,1) -- (0,1) -- (0,0);
        \vertex[] (x) at (0, 0) {};

        \pgfmathsetmacro\width{1.414*\eps*\eps / (2 - \eps)};
        \pgfmathsetmacro\offset{(\eps*1.414 - \width) / 2};
        
        \foreach \i in {1,...,\m} {
        
            \pgfmathsetmacro\delta{\eps/(\m-1)};
            \pgfmathsetmacro\kappa{\width/(\m-1)};
            
            \vertex[] (a) at ({1-\eps + (\i-1)*\delta},{1 - (\i-1)*\delta}) {};

            \ifthenelse{\i>1}{\draw (a) -- (b);}{}
            
            \ifthenelse{\i=1}{\draw (x) -- (a);}{}
            \ifthenelse{\i=\m}{\draw (x) -- (a);}{}
            
            \ifthenelse{\i<\m}{\vertex[] (b) at ({((\i+1)*\kappa + \offset)/1.414},{(\eps - \i*\kappa)/1.414}) {};}{}
            
            \draw (a) -- (b);
        }
    \end{tikzpicture}\hspace{2em}
    \begin{tikzpicture}[scale=5]
        \pgfmathsetmacro\m{5};
        \pgfmathsetmacro\eps{0.33};
        \pgfmathsetmacro\eta{\eps/5};
        
        \draw[gray] (0,0) -- (1,0) -- (1,1) -- (0,1) -- (0,0);
        \vertex[] (x) at (0, 0) {};

        \pgfmathsetmacro\width{1.414*\eps*\eps / (2 - \eps)};
        \pgfmathsetmacro\offset{(\eps*1.414 - \width) / 2};
        \pgfmathsetmacro\middle{(\m-1)/2};
        \pgfmathsetmacro\middlee{(\m-1)/2+1};
        
        \foreach \i in {1,...,\m} {
            \pgfmathsetmacro\delta{\eps/(\m-1)};
            \pgfmathsetmacro\kappa{\width/(\m-1)};
            
            \vertex[] (a) at ({1-\eps + (\i-1)*\delta},{1 - (\i-1)*\delta}) {};
            
            \ifthenelse{\i<\m}{\vertex[] (b) at ({((\i+1)*\kappa + \offset)/1.414},{(\eps - \i*\kappa)/1.414}) {};}{}
            \ifthenelse{\i=1}{\draw (x) -- (b);}{}
            \ifthenelse{\i=\m-1}{\draw (x) -- (b);}{}
            \ifthenelse{\i=\middle}{\draw (b) -- (1-\eps, 1);}{}
            \ifthenelse{\i=\middlee}{\draw (b) -- (1, 1-\eps);}{}
        }
        \draw (1-\eps,1) -- (1,1-\eps);
    \end{tikzpicture}
    \caption{The construction used in \Cref{thm:worstcase}. Left: a noncrossing tour of length $\Omega(n)$.
    Right: a tour of length $O(1)$.}
    \label{fig:worstcase}
\end{figure}

\section{Worst Case}\label{sec:worstcase}

We construct a worst-case instance in which there exists a
noncrossing tour with length $\Omega(n)$, as well as a tour
of constant length. The construction we use is depicted in \Cref{fig:worstcase}.

\begin{theorem}\label{thm:worstcase}
    Let $n \in \mathbb{N}$ be even.
    For $\epsilon > 0$ sufficiently small,
    there exists an instance of the Euclidean TSP in the plane
    where X-opt has approximation ratio at least $\frac{n}{2} \cdot (1 - \epsilon)$.
    In particular, the approximation ratio can be brought arbitrarily close to $\frac{n}{2}$.
\end{theorem}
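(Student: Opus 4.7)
The plan is to exhibit explicitly a point set and a tour on it that is noncrossing (hence a local optimum for X-opt) but has length $\Omega(n)$, together with a second tour of length $O(1)$ witnessing that $\mathrm{OPT}$ is constant. Concretely, with $m = n/2$, I place one point $x$ at the origin, $m$ points $a_1, \ldots, a_m$ equally spaced on the diagonal segment from $(1-\epsilon, 1)$ to $(1, 1-\epsilon)$, and $m-1$ points $b_1, \ldots, b_{m-1}$ equally spaced on a very short segment near the origin, oriented parallel to the $a$-segment, of width $O(\epsilon^2)$ so that the resulting zigzag described below is noncrossing. This is exactly the construction in Figure~\ref{fig:worstcase}.

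First, I would exhibit the candidate ``long'' tour
\[
T_1 \;=\; x \to a_1 \to b_1 \to a_2 \to b_2 \to \cdots \to b_{m-1} \to a_m \to x,
\]
and argue that it is noncrossing. The closing edges $xa_1$ and $xa_m$ share the endpoint $x$ and target the two extreme points of the $a$-segment; since $x$ lies far from both segments, these two edges form the outer boundary of a convex region containing the zigzag, so they cannot cross any zigzag edge. For the zigzag itself, I would invoke the standard fact that, given two nearly parallel short segments with points $a_1,\ldots,a_m$ in order on one and $b_1,\ldots,b_{m-1}$ in the matching order on the other, the alternating path $a_1 b_1 a_2 b_2 \cdots a_m$ is noncrossing, provided the shorter of the two segments (here the $b$-segment, of width $O(\epsilon^2)$) is small enough relative to the gap between them (here $\approx \sqrt{2}$). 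This last estimate is where the explicit choice of width in the construction is used, and it is the main technical obstacle: I have to make sure that a ``forward'' edge $a_i b_i$ and a ``backward'' edge $b_{j-1} a_j$ cannot cross for any $i,j$, which amounts to checking that the intersection of their supporting lines lies outside the strip between the two segments.

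Second, I would exhibit the ``short'' competing tour $T_2$: starting from $x$, traverse $b_1, b_2, \ldots, b_{\lfloor (m-1)/2\rfloor}$ in order, jump to $a_1$, walk along the $a$-segment through $a_2, \ldots, a_m$, jump to $b_{\lfloor (m-1)/2\rfloor + 1}$, traverse the remaining $b$'s, and return to $x$. This tour uses only two edges of nontrivial length (the two jumps between the $b$-cluster and the $a$-segment), each of length at most $\sqrt{2}+O(\epsilon)$; the two edges incident to $x$ have length $O(\epsilon)$ because the $b$-cluster sits within distance $O(\epsilon)$ of the origin, and all within-cluster edges sum to $O(\epsilon)$. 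Hence $\ell(T_2) \le 2\sqrt{2} + O(\epsilon)$.

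Finally, I would lower-bound $\ell(T_1)$ by noting that each of the $2(m-1)$ zigzag edges spans the gap between the two segments, and thus has length at least $\sqrt{2} - O(\epsilon)$; combined with the two long closing edges $xa_1$ and $xa_m$ of length $\sqrt{2} - O(\epsilon)$, this gives $\ell(T_1) \ge 2m \cdot \sqrt{2} \cdot (1 - O(\epsilon)) = n\sqrt{2}\,(1-O(\epsilon))$. Since X-opt can terminate at the noncrossing tour $T_1$, its approximation ratio on this instance is at least
\[
\frac{\ell(T_1)}{\ell(T_2)} \;\ge\; \frac{n\sqrt{2}}{2\sqrt{2}}(1-O(\epsilon)) \;=\; \frac{n}{2}(1-O(\epsilon)),
\]
which can be made at least $\tfrac{n}{2}(1-\epsilon)$ by shrinking the hidden constant, completing the proof. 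The only genuinely delicate step is verifying noncrossing of $T_1$; everything else reduces to elementary distance estimates in the chosen coordinates.
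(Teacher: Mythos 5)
Your proposal is correct and follows essentially the same route as the paper: the identical point configuration (one point at the origin, $n/2$ points on a short diagonal segment near $(1,1)$, and $n/2-1$ points on a much shorter parallel segment near the origin), the same zigzag noncrossing tour of length about $\sqrt{2}\,n$, and the same constant-length competing tour, yielding the ratio $\tfrac{n}{2}(1-O(\epsilon))$. The only difference is that you spell out the noncrossing verification of the zigzag, which the paper asserts directly from the construction.
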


\begin{proof}
    We place one point $s$ at $(0, 0)$. Next, we place
    $k = n/2$ points equally spaced along the line segment extending from
    $(1 - \epsilon/2, 1)$ to $(1, 1-\epsilon/2)$. We label these points $\{y_i\}_{i=1}^k$,
    ordering them by increasing $x$-coordinate.
    
    Consider the cone $K$ with vertex at the origin, defined by all conic combinations of
    $\{y_1, y_k\}$. Define the height along the axis of $K$ of a point $a$ by 
    the distance of $a$ from the origin along the axis of $K$.
    We place $k-1$ points
    along the line segment perpendicular to the axis of $K$ at a height
    $\epsilon/\sqrt{8}$, excluding its endpoints. We label these
    points $\{x_i\}_{i=1}^{k-1}$, sorting them again by increasing $x$-coordinate.
    Note that it does not matter where exactly we place these points, as long
    as no two points are placed in the same location.
    Observe that we have now placed exactly
    $2k = n$ points inside $[0, 1]^2$.
    
    To draw a noncrossing tour, we start at $s$, and draw the edge
    $\{s, y_1\}$. We then draw the edges $\{y_i, x_i\}_{i=1}^{k-1}$.
    Lastly, we add the edges $\{x_{k-1}, y_k\}$ and $\{y_k, s\}$, which closes up
    the tour.
    
    By construction, this tour contains no intersecting edges. To bound
    its length from below, observe that all edges have a length of at least
    $\sqrt{2} - \epsilon/\sqrt{2} = \sqrt{2}(1 - \epsilon/2)$. Thus, this tour has a length
    of at least $\sqrt{2}n(1-\epsilon/2)$.
    
    We now bound the length of the optimal tour from above, by
    \[
        d(s, x_1) + d(x_1, y_1) + d(s, x_{k-1}) + d(x_{k-1}, y_k)
            + d(x_1, x_k) + d(y_1, y_k)
                \leq 2\sqrt{2}(1+\epsilon/2).
    \]
    Putting these bounds together, we find a ratio of
    \[
        \frac{\sqrt{2}n(1-\epsilon/2)}{2\sqrt{2}(1+ \epsilon/2)} 
            = \frac{n}{2} \cdot \frac{1 - \epsilon/2}{1 + \epsilon/2}
            \geq \frac{n}{2} \cdot (1-\epsilon),
    \]
    as claimed. \qed
\end{proof}

\begin{remarknonum}
    The construction used for \Cref{thm:worstcase} only holds for $n$ even. For
    odd $n$, we can use a similar construction, but the approximation ratio then becomes
    $(n-1)/2 \cdot (1-\epsilon)$.
\end{remarknonum}

A simple argument shows that the approximation ratio given in \Cref{thm:worstcase}
is essentially as bad as one can get in the metric TSP. Given any instance,
let $x$ and $y$ be those points separated by the greatest distance. Any tour must
travel from $x$ to $y$ and  back to $x$ again, so any tour is of length at least
$2\cdot d(x, y)$. Moreover, every tour contains exactly $n$ edges, so any tour has length
at most $n\cdot d(x, y)$. Hence, the approximation ratio of any algorithm for the metric 
TSP is at most $n/2$.

\section{Average Case}

Although the worst-case construction of \Cref{sec:worstcase}
shows that the uncrossing heuristic may yield
almost as bad of an approximation as is possible for TSP, it is possible
that the heuristic still shows good behavior on average. To exclude this
possibility, we consider a standard average-case model wherein $n$ points
are placed uniformly and independently in the plane. We then
construct a tour of length $\Omega(n)$ in expectation.
We present our results in \Cref{thm:averagecase}.

To simplify our arguments, it would be convenient
if we could consider only a subset of all points, so that
we can look at a linear-size sub-instance with nicer properties.
Constructing a long noncrossing tour through this subset would be much
easier.

Since an optimal tour through a set of points $X$ is always
at least as long as an optimal tour through a subset
$Y \subset X$, it is tempting to conjecture that
something similar holds for all noncrossing tours. Perhaps,
given a noncrossing tour through $Y \subseteq X$, we can
extend the tour to all of $X$ without decreasing its length?

Unfortunately, this turns out to be false. We provide a counterexample in
\Cref{fig:counterexample}.

\begin{figure}
    \centering
    \begin{tikzpicture}[scale=2]

        \vertex[label=left:$x$] (x) at (0, 0) {};
        \vertex[label=left:$a$] (a) at (90: 0.5) {};
        \vertex[label=left:$b$] (b) at (210: 0.5) {};
        \vertex[label=right:$c$] (c) at (-30: 0.5) {};

        \vertex[label=right:$u$] (u) at (-18: 1) {};
        \vertex[label=left:$v$] (v) at (-18 + 120: 1) {};
        \vertex[label=left:$w$] (w) at (-18 + 240: 1) {};

        \draw[
            decoration={
        brace,
        mirror,
        raise=0.2cm
    }, decorate] (b) -- (c) node[sloped, pos=0.5, below=0.3cm] {$1$};
    \draw[dashed] (b) -- (c);
    \draw[dashed] (v) -- (c);
    
        \draw[
            decoration={
        brace,
        mirror,
        raise=0.2cm
    }, decorate] (c) -- (v) node[sloped, pos=0.5, above=0.3cm] {$L$};
             
    \end{tikzpicture}\hspace{2em}
    \begin{tikzpicture}[scale=2]

        \vertex[] (x) at (0, 0) {};
        \vertex[] (a) at (90: 0.5) {};
        \vertex[] (b) at (210: 0.5) {};
        \vertex[] (c) at (-30: 0.5) {};

        \vertex[] (u) at (-18: 1) {};
        \vertex[] (v) at (-18 + 120: 1) {};
        \vertex[] (w) at (-18 + 240: 1) {};

        \draw (b) -- (u) -- (c) -- (v) -- (a) -- (w) -- (b);

    \end{tikzpicture}\hspace{2em}
     \begin{tikzpicture}[scale=2]

        \vertex[] (x) at (0, 0) {};
        \vertex[] (a) at (90: 0.5) {};
        \vertex[] (b) at (210: 0.5) {};
        \vertex[] (c) at (-30: 0.5) {};

        \vertex[] (u) at (-18: 1) {};
        \vertex[] (v) at (-18 + 120: 1) {};
        \vertex[] (w) at (-18 + 240: 1) {};

        \draw (x) -- (w);
        \draw (c) -- (w); \draw (a) -- (u); \draw (b) -- (v);
        \draw (c) -- (u); \draw (a) -- (v);
        \draw (b) -- (x);

    \end{tikzpicture}
    \caption{Left: instance described in \Cref{lemma:counterexample}. 
    Center: the tour $T$ from \Cref{lemma:counterexample}.
    Right: the longest possible noncrossing tour through all points
    in the instance.}
    \label{fig:counterexample}
\end{figure}
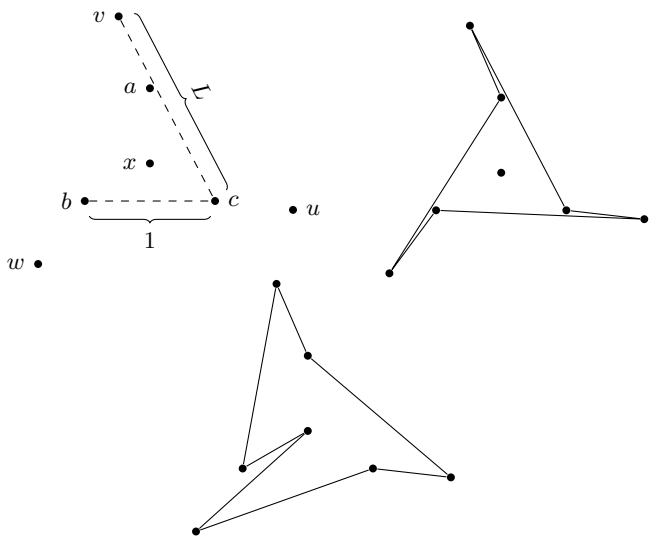

\begin{theorem}\label{lemma:counterexample}
    There exists a set of points in the plane, together with a noncrossing 
    tour $T$ through all but one of these points, such that all noncrossing tours
    through all points have length strictly less than $\ell(T)$.
\end{theorem}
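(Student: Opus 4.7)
The plan is to realize the claim by a concrete construction of a seven-point instance, following the schematic in \Cref{fig:counterexample}. I would place a central point $x$ at the origin, an inner equilateral triangle $\{a, b, c\}$ of small circumradius centered at $x$, and an outer equilateral triangle $\{u, v, w\}$ of larger circumradius, rotated slightly with respect to the inner one so that each outer vertex has a unique angularly nearest inner vertex. The tour $T$ visits the six non-central vertices in the cyclic order $b, u, c, v, a, w$, alternating between inner and outer, with three ``diagonal'' edges ($bu$, $cv$, $aw$) that pass close to the origin and three ``link'' edges ($uc$, $va$, $wb$) between nearby inner-outer pairs. I would verify $T$ is noncrossing by direct coordinate inspection and write $\ell(T)$ in closed form via the law of cosines.

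The heart of the proof is to show that every noncrossing seven-tour $T'$ has length strictly less than $\ell(T)$. A useful geometric observation is that the simple polygon bounded by $T$ fails to be star-shaped with respect to $x$: in fact, each chord from $x$ to any outer vertex $u, v, w$ exits the polygon before reaching its endpoint. Consequently, no noncrossing seven-tour can be obtained from $T$ merely by subdividing one of its edges at $x$; every such $T'$ must use a genuinely different cyclic structure on the six non-central vertices from the one used in $T$.

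I would next enumerate the noncrossing seven-tours up to the three-fold rotational symmetry of the instance, reducing the analysis to a short list of essentially distinct cases distinguished by the identity of $x$'s two neighbors. For each case, I would compute $\ell(T')$ in closed form via the law of cosines and verify $\ell(T') < \ell(T)$ directly. The key qualitative feature is that $T$ uses three ``long'' diagonals while any noncrossing seven-tour is forced to trade at least one such diagonal for a detour that is shorter overall.

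The main obstacle is precisely the parameter tuning and case-by-case verification: the enumeration of noncrossing seven-tour classes must be exhaustive, and the length comparison must be strictly favorable for $T$ over every competing tour. I expect this to reduce to a small finite system of inequalities in the two radii and the rotation angle, and the strict inequality should persist under perturbations, yielding an open set of valid parameter choices from which any specific realization can be taken as the counterexample.
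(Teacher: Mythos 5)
Your construction is essentially the paper's own (a central point $x$, an inner unit equilateral triangle $\{a,b,c\}$, a much larger slightly rotated outer triangle $\{u,v,w\}$, and the alternating tour $T=avcubwa$), and your overall strategy --- exhibit $T$, then show every noncrossing tour through all seven points is strictly shorter --- is also the same. The gap is in the second step, which is the entire content of the theorem, and your plan for it is both incompletely specified and missing the idea that makes it tractable. First, the case reduction you propose is not a valid one: the identity of $x$'s two neighbours does not determine a tour, so a ``short list of cases distinguished by $x$'s neighbours'' does not exhaust the $6!/2=360$ Hamiltonian cycles even after quotienting by the three-fold symmetry; for each choice of neighbours you would still have to argue about every way of threading the remaining five points and about which threadings are noncrossing. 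Second, you leave the two circumradii and the rotation angle as parameters ``to be tuned'', but the inequality $\ell(T')<\ell(T)$ for all competing $T'$ is not obviously true for a generic member of your family, and the tuning is not a perturbative afterthought: it is the mechanism that makes the verification possible at all. Relatedly, your qualitative picture (three ``diagonals through the origin'' versus three ``links between nearby pairs'') does not reflect the proof-relevant geometry: in the working construction all six edges of $T$ are long and differ only by additive constants, and the obstruction to inserting $x$ is an accounting of how many long edges a seven-edge tour can contain, not a trade of a diagonal for a detour.

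The paper resolves both issues at once by sending the outer scale $L\to\infty$ while keeping the inner triangle at unit side and choosing the angles $\angle cub=\angle avc=\angle awb$ of order $1/L$. Every pairwise distance then falls into one of seven length classes (up to $o(1)$: $1/\sqrt3$, $1$, $L-1$, $L-\tfrac12$, $L-\tfrac{1}{4\sqrt3}$, $L$, $\sqrt3 L$), $\ell(T)=6L-3+o(1)$, and almost all competitors are eliminated by counting rather than enumeration: a tour with $k$ very long edges can contain at most $6-2k$ long edges and is too short; a tour with two short edges is too short; hence a competitive tour has exactly six long edges and one short edge, and a few structural noncrossing arguments pin the best such tour at $6L-3.5+\tfrac{3}{4\sqrt3}+o(1)<\ell(T)$. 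To complete your version you would need either to import this separation of scales, or to commit to explicit coordinates and carry out a genuinely exhaustive (realistically computer-assisted) check over all noncrossing seven-tours; as written, the heart of the proof is deferred to a case analysis whose organization is flawed and whose conclusion is not established.
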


\begin{proof}
    Consider the instance shown in \Cref{fig:counterexample}, along
    with the tour $T = avcubwa$. Note that $T$ passes through all points,
    except for the central point $x$. We construct this instance such that
    $d(b, c) = d(a, b) = d(a, c) = 1$, so that $a$, $b$ and $c$ form the
    vertices of an equilateral triangle. We also set the distances
    $d(v, c) = d(b, u) = d(a, w) = L$, where $L \gg 1$. We set the angle
    $\angle cub = \angle avc = \angle a w b$ small, so that
    $d(c, u) = d(v, a) = d(b, w) = L - 1 + o(1)$; here,
    $o(1)$ means terms decreasing in $L$. For instance,
    setting this angle to $1/L$ would suffice.
    Then we have $d(w, c) = L - \frac{1}{2} + o(1)$.
    By construction, $d(x, a) = d(x, b) = d(x, c) = 1/\sqrt{3}$.
    We can furthermore compute $d(u, v) = d(v, w) = d(u, w) = \sqrt{3} L + o(1)$,
    $d(a, u) = d(b, v) = d(c, w) = L - \frac{1}{2} + o(1)$, and
    $d(x, u) = d(x, v) = d(x, w) = L - \frac{1}{4\sqrt{3}} + o(1)$.
    Since we will compare sums of these various distances, all inequalities
    in the remainder of the proof are understood to hold for sufficiently large $L$.

    We classify the edges into types:

    \begin{table}[h!]
    \centering
    \begin{tabular}{l|l|l|l|l|l|l|l}
         Type & 1 & 2 & 3 & 4 & 5 & 6 & 7 \\ \hline
         Length (up to $o(1)$ terms) & $\frac{1}{\sqrt{3}}$ & 1 & $L-1$ & $L-\frac{1}{2}$ & $L-\frac{1}{4\sqrt{3}}$ & $L$ & $\sqrt{3}L$
    \end{tabular}
    \end{table}
    We furthermore call the edges of types 1 and 2 the short edges,
    of types 3 through 6 the long edges, and of type 7 the very long
    edges.
    Finally, we classify the points $\{a, b, c, u, v, w\}$ into the near points
    $N = \{a, b, c\}$ and the far points $F = \{u , v, w\}$.

    We start by computing the length of $T$:
    \[
        \ell(T) = 3L + 3(L-1) + o(1) = 6L - 3 + o(1).
    \]
    We now show that all noncrossing tours that include $x$ are shorter than
    $6L - 3 + o(1)$.

    As it is laborious to check all possible noncrossing tours, we begin by excluding some
    possibilities. First, suppose the tour contains $k$ very long edges. Then the tour
    can contain at most $6-2k$ long edges, since each long edge has exactly one endpoint
    in $F$, and $k+1$ of these have been used. Such a tour
    has a length of at most
    $k\sqrt{3} + (6-2k)L + O(1) < 6L - 3 + o(1)$. Thus, any sufficiently long
    tour cannot contain any very long edges.

    Now suppose that a tour contains two short edges. Since very long edges are excluded,
    such a tour has length at most $5L + 2 + o(1) < 6L - 3 + o(1)$.
    Therefore, the tour must consist of
    six long edges and one short edge.

    Suppose the unique short edge in our tour is of type 2. Without loss
    of generality, we assume it is $\{a, b\}$. Since all other
    edges are long, both $a$ and $b$ must connect to points in $F$. Suppose
    $a$ connects to $u$. Then $x$ must connect to $u$ and $w$, since
    otherwise we would introduce a second short edge. But this makes it
    impossible to connect $c$ to $v$ without creating an intersection or a subtour.
    Thus, $a$ cannot connect to $u$.
    
    By identical reasoning, we cannot connect $b$ to $u$. The only option
    is thus to connect $a$ to $v$ and $b$ to $w$. As very long edges are excluded, $v$
    must connect to $c$ (connecting to $x$ would yield an intersection with $L(a, b)$) and
    $x$ must connect to $w$. But this makes it impossible
    to connect $u$ to the tour without creating an intersection. 
    This shows that the short edge can only be of type 1.

    Suppose the tour contains an edge of type 6. Without loss of generality, we assume this to be
    $\{b, u\}$. Then observe that $w$ can only be connected to the rest
    of the instance by connecting it to $b$ or $c$, since very long edges are excluded.
    Hence, connecting $w$ would yield a subtour, and so no edges of type 6 are possible.

    We now note that the tour cannot contain all three edges of type 5, since these all connect
    to $x$.
    Suppose the tour contains
    two such edges, say, $\{x, v\}$ and $\{x, w\}$. Then it is not possible to
    connect $b$ to the rest of the instance without creating an intersection or a subtour.
    Hence, at most one edge of type 5 is permissible.

    The longest tour now contains
    one edge of type 5, three edges of type 4, two edges of type 3 and one edge of type 1,
    which yields a length of
    \[
        L - \frac{1}{4\sqrt{3}} + 3\left(L-\frac{1}{2}\right) + 2(L-1) + \frac{1}{\sqrt{3}} + o(1)
            = 6L - 3.5 + \frac{3}{4\sqrt{3}} + o(1) < 6L - 3 + o(1).
    \]
    Therefore, all noncrossing tours through all points of the instance have
    length strictly less than $\ell(T)$. \qed
\end{proof}

\Cref{lemma:counterexample} shows 
that in constructing a tour through a random instance, we must carefully make sure
to take all points into account. Any points we leave behind could
reduce the length of the tour if we attempt to add them
after constructing a long subtour.

As the worst-case tour consists of many long almost-parallel edges, we seek to construct
a similar tour in random instances.
Our strategy will consist of dividing part of the unit square into many long parallel strips,
and forming noncrossing Hamiltonian paths within these strips. We then connect paths
of adjacent strips without creating intersections, forming a long Hamiltonian path through all strips
together. The endpoints can then be connected if we leave out some space for points
along which to form a connecting path. See \Cref{fig:averagecase} for a schematic
depiction of our construction.

Before we proceed to the proof of \Cref{thm:averagecase}, we need some simple lemmas.
We start with a lemma that bounds the probability that any region in
$[0,1]^2$ contains too few points for our construction to work.

\begin{lemma}\label{lemma:nonempty}
    Let $A \subseteq [0,1]^2$, and let $X$ be a finite set of $n$ points
    placed independently uniformly at random in $[0, 1]^2$. Let
    $N = |A \cap X|$. Then
    \[
    \prob(N \leq k) \leq e^k e^{-\frac{k^2}{2n\mathrm{area}(A)}}
        e^{-n\cdot \mathrm{area}(A)/2}
    \]
    for $k \leq n \cdot \mathrm{area}(A)$.
\end{lemma}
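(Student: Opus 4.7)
The plan is to recognize $N$ as a binomial random variable and apply a standard multiplicative Chernoff bound, then perform some algebraic rearrangement so that the exponent has the stated form. Writing $p = \mathrm{area}(A)$, each of the $n$ independent points lands in $A$ with probability exactly $p$, so $N \sim \mathrm{Bin}(n,p)$ with mean $\mu = np = n\cdot\mathrm{area}(A)$.

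The key tool is the standard Chernoff lower-tail bound $\prob(N \leq (1-\delta)\mu) \leq \exp(-\delta^2\mu/2)$, valid for $\delta \in [0,1]$. The hypothesis $k \leq n\cdot\mathrm{area}(A) = \mu$ lets me set $\delta = 1 - k/\mu \in [0,1]$, so that $(1-\delta)\mu = k$. Substituting into the Chernoff estimate yields
\[
    \prob(N \leq k) \leq \exp\left(-\frac{(\mu - k)^2}{2\mu}\right).
\]

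Now I expand the square in the exponent:
\[
    \frac{(\mu - k)^2}{2\mu} \;=\; \frac{\mu}{2} \,-\, k \,+\, \frac{k^2}{2\mu}.
\]
Plugging this decomposition back in and re-substituting $\mu = n\cdot\mathrm{area}(A)$ gives
\[
    \prob(N \leq k) \leq e^{-\mu/2}\cdot e^{k} \cdot e^{-k^2/(2\mu)} = e^{k}\, e^{-k^2/(2n\cdot\mathrm{area}(A))}\, e^{-n\cdot\mathrm{area}(A)/2},
\]
which is exactly the claimed inequality.

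There is essentially no hard step here; the proof is a routine application of Chernoff followed by an algebraic identity. The one thing to be careful about is invoking the form of the Chernoff bound that has the factor $1/2$ in the exponent (rather than $1/3$ or the two-sided version), and verifying the hypothesis $k \leq \mu$ coincides precisely with the range $\delta \in [0,1]$ in which that bound is valid. If a self-contained derivation is preferred over citing Chernoff, one can instead start from Markov's inequality applied to $\expect[e^{-tN}]$ for $t > 0$, use $\expect[e^{-tN}] = (1 - p + pe^{-t})^n \leq \exp(np(e^{-t}-1))$, and optimize over $t$ to recover the same exponent.
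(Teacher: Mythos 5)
Your proof is correct and follows essentially the same route as the paper: both recognize $N$ as a sum of independent indicators with mean $\mu = n\cdot\mathrm{area}(A)$, apply the multiplicative Chernoff lower-tail bound with $\delta = 1 - k/\mu$, and expand the resulting exponent to obtain the three stated factors. No substantive difference.
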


\begin{proof}
    For $x \in X$, let $S(x)$ be an indicator variable taking a value of
    $1$ iff $x \in A$, and 0 otherwise. Then $N = \sum_{x \in X} S(x)$.
    Let $\mu = \expect(N) = n\cdot \mathrm{area}(A)$. By Chernoff's bound,
    for $\delta \in (0, 1)$,
    \[
        \prob(N \leq (1-\delta)\mu)
            \leq e^{-\delta^2 \cdot n \cdot \mathrm{area}(A)/2}.
    \]
    The result now follows from setting
    $\delta = 1 - k/\mu$, which implies $\delta^2 = 1 + k^2/\mu^2 - 2k/\mu$,
    and inserting this into the above bound. \qed
\end{proof}

The following observation and lemma are required to form suitable Hamiltonian paths
through subsets of our random instance.

\begin{observation}\label{lemma:intersection}
    Let $a, b, c, d$ be four distinct points in the plane, no three of which are
    collinear. Suppose $L(a, b)$ intersects $L(c, d)$. Then
    $L(a, d)$ cannot intersect $L(b, c)$. Moreover, $d(a, b) + d(c, d) > d(a, d) + d(b, c)$.
\end{observation}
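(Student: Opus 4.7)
The plan is to let $p$ denote the intersection point of $L(a,b)$ and $L(c,d)$, and to read off both conclusions from the geometry of the four points around $p$. For the length inequality, I would apply the triangle inequality to the triangles $apd$ and $bpc$. Since $p$ lies on both crossing segments, $d(a,b) = d(a,p) + d(p,b)$ and $d(c,d) = d(c,p) + d(p,d)$. The triangle inequality gives $d(a,d) \leq d(a,p) + d(p,d)$ and $d(b,c) \leq d(b,p) + d(p,c)$, and summing these and rearranging yields $d(a,d) + d(b,c) \leq d(a,b) + d(c,d)$. To upgrade this to a strict inequality, I would invoke the no-three-collinear hypothesis: if $a, p, d$ were collinear, then since $p \in L(a,b)$ also $a, p, b$ are collinear, which would force $a, b, d$ to be collinear, contradicting the hypothesis. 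Hence the triangle inequality for $apd$ is strict, and the same argument handles $bpc$.

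For the non-crossing part, I would first argue that the four points are in convex position. Indeed, the no-three-collinear assumption rules out degenerate configurations, and if one of the four points lay inside the triangle formed by the other three, then the segments $L(a,b)$ and $L(c,d)$ could not cross at an interior point; so the convex hull is a quadrilateral. The crossing segments $L(a,b)$ and $L(c,d)$ must then be the two diagonals of this quadrilateral, which forces the cyclic order around the hull to be $a, c, b, d$ (or its reverse). In this cyclic order, the pairs $\{a, d\}$ and $\{b, c\}$ correspond to two opposite sides of the convex quadrilateral, and opposite sides of a convex quadrilateral do not intersect.

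The hard part, such as it is, is not the algebra but pinning down the convex-position claim cleanly: one has to check that a crossing of two of the $\binom{4}{2}$ segments really does force the four points to be the vertices of a convex quadrilateral in the claimed cyclic order. Once that is in place, the rest is bookkeeping, and the strictness of the length inequality follows almost for free from the no-three-collinear hypothesis. Since this observation will be applied in later arguments to turn crossings into 2-opt-style shortcuts, I would state both conclusions together exactly as above so they can be cited in one line.
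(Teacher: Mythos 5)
Your proof is correct. Note that the paper states this as an \emph{Observation} and gives no proof at all, so there is nothing to compare against; your argument --- the strict triangle inequality at the crossing point $p$ for the length claim, and the convex-position/opposite-sides argument for the non-crossing claim --- is the standard justification the authors evidently had in mind, and both halves (including the use of the no-three-collinear hypothesis to rule out $p$ coinciding with an endpoint and to make the triangle inequalities strict) check out.
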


\begin{lemma}\label{lemma:hamiltonian_path}
    Let $X$ be a set of distinct points in the plane, and
    assume no three points in $X$ are collinear. Fix distinct points $s, t \in X$.
    Then there exists a noncrossing Hamiltonian path through $X$ with endpoints
    $s$ and $t$.
\end{lemma}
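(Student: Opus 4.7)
The plan is to use a standard shortest-path argument: among all Hamiltonian paths from $s$ to $t$ on $X$ (finitely many, since $X$ is finite), pick one of minimum total Euclidean length and show it must be noncrossing. Concretely, I would enumerate the vertices along such a minimum path as $s = v_0, v_1, \ldots, v_{n-1} = t$ and argue by contradiction: suppose there are indices $i+1 < j$ such that $L(v_i, v_{i+1})$ and $L(v_j, v_{j+1})$ intersect.

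The key tool is Observation~\ref{lemma:intersection}. Since no three points of $X$ are collinear, the four points $v_i, v_{i+1}, v_j, v_{j+1}$ satisfy its hypothesis, and the observation yields the strict inequality
\[
    d(v_i, v_{i+1}) + d(v_j, v_{j+1}) > d(v_i, v_j) + d(v_{i+1}, v_{j+1}).
\]
I would then perform the 2-opt swap on the minimum path, replacing the two crossing edges by $\{v_i, v_j\}$ and $\{v_{i+1}, v_{j+1}\}$. The resulting edge set is again a Hamiltonian path, namely
\[
    s = v_0, \ldots, v_i, v_j, v_{j-1}, \ldots, v_{i+1}, v_{j+1}, \ldots, v_{n-1} = t,
\]
where the segment between the swapped edges is traversed in reverse. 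Crucially, because the swap occurs strictly inside the sequence, the endpoints $s$ and $t$ are preserved. By the strict inequality above, this new path is shorter than the supposed minimum, a contradiction.

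The proof is essentially routine once Observation~\ref{lemma:intersection} is invoked, so the main obstacle is conceptual rather than technical: one must verify that the 2-opt move applied to a \emph{path} (not a tour) still yields a valid Hamiltonian path with the same endpoints, and that the notion of ``crossing'' used here refers to pairs of non-adjacent edges whose line segments meet (adjacent edges always share an endpoint). The no-three-collinear hypothesis is exactly what converts the weak triangle inequality into the strict one needed to rule out an infinite descent among ties, so I would make sure to invoke it explicitly when applying the observation.
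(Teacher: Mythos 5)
Your proof is correct and rests on the same key ingredient as the paper's: Observation~\ref{lemma:intersection} combined with the endpoint-preserving 2-opt exchange on a path. The only difference is cosmetic --- you take a minimum-length Hamiltonian $s$--$t$ path and derive a contradiction from a crossing, whereas the paper starts from an arbitrary path and iteratively uncrosses, arguing termination from the strict length decrease and the finiteness of the set of paths; these are interchangeable formulations of the same extremal argument.
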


\begin{proof}
    Fix an arbitrary, possibly crossing Hamiltonian path with endpoints $s$ and $t$.
    Suppose the edges $e = \{a, b\}$ and $f = \{c, d\}$ intersect. By 
    Observation \ref{lemma:intersection},
    $e' = \{a, d\}$ and $f' = \{b, c\}$ do not intersect. We assume that the path
    remains connected if we exchange $e$ and $f$ for $e'$ and $f'$ (if this fails, then
    we swap $a$ for $b$).

    Observation \ref{lemma:intersection} also shows that the length of the resulting path is strictly
    smaller than the length of the original path. We repeat this process, removing intersections
    until we obtain a noncrossing path. This process must terminate in a finite number of steps,
    since the number of Hamiltonian paths is finite and each step strictly decreases the path length.
    Hence, no path is seen twice.

    It remains to show that $s$ and $t$ remain the endpoints throughout this process.
    Observe that a point is an endpoint of the path if and only if
    it has degree 1. Since the exchange operation preserves the degree of all vertices
    in the path, the endpoints of the path do not change. \qed
\end{proof}

The next lemma is useful to connect Hamiltonian paths in neighboring rectangular
regions. 

\begin{lemma}\label{lemma:path_extension}
    Let $A$, $B$ and $C$ be distinct rectangular regions in the plane. Assume
    $B$ shares an edge with $A$ and with $C$, but $A$ and $C$ are disjoint except possibly
    in a single point. Let $S_A$, $S_B$ and $S_C$ be finite
    sets of points in $A$, $B$ and $C$ respectively.
    Let $x_A$ and $x_C$ be the points in $B$ closest to $A$ and $C$,
    and assume $x_A \neq x_C$.
    Let $P$ be a noncrossing Hamiltonian path through $S_B$ with endpoints $x_A$ and $x_C$.
    Let $P'$ be a path obtained by connecting $x_A$ to any point in $S_A$ and $x_C$
    to any point in $S_C$. Then $P'$ is noncrossing.
\end{lemma}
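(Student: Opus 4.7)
The core of the argument is that the ``closest point'' condition on $x_A$ confines the new edge $\{x_A, y_A\}$ to a thin strip adjacent to the shared side $e_A$ of $A$ and $B$, which no edge of $P$ can reach. Formally, I would introduce the height function $h_A(z) = d(z, e_A)$. Restricted to $B$, this equals the perpendicular distance to the line supporting $e_A$, so $h_A$ is affine along any line segment in $B$. By the minimality assumption, every point $z \in S_B$ satisfies $h_A(z) \geq h_A(x_A)$. The analogous height $h_C$ with respect to $e_C$ is defined symmetrically.

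The proof then reduces to checking three sets of potential crossings. For $\{x_A, y_A\}$ versus any edge $\{u,v\}$ of $P$: outside $B$ the new edge lies in $A$, which meets $B$ only along the one-dimensional set $e_A$, so the two cannot intersect in the interior. Inside $B$ the new edge has $h_A \in [0, h_A(x_A)]$, while $\{u,v\}$, having both endpoints with $h_A \geq h_A(x_A)$, satisfies $h_A \geq h_A(x_A)$ throughout by affinity. Any intersection is thus confined to the level set $h_A = h_A(x_A)$, which on the new edge contains only the endpoint $x_A$; correspondingly on $\{u,v\}$ it must contain $x_A$ as an endpoint, forcing $\{u,v\}$ to be the unique edge of $P$ incident to $x_A$. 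These two edges therefore merely share the endpoint $x_A$ and do not cross in the relevant sense. The symmetric argument applies to $\{x_C, y_C\}$.

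To finish I would rule out crossings between $\{x_A, y_A\}$ and $\{x_C, y_C\}$ themselves. Outside $B$ they lie in $A$ and $C$, which are disjoint except possibly at a single point. Inside $B$ the same height argument with $h_A$ shows that any intersection must lie at the level $h_A = h_A(x_A) = h_A(x_C)$, which identifies the meeting point as simultaneously $x_A$ and $x_C$, contradicting $x_A \neq x_C$. The main obstacle is the careful handling of degenerate configurations: ties among nearest points and collinearities require an implicit general-position assumption (which holds almost surely for the random instances where the lemma is applied), and the last step needs a brief case distinction depending on whether $e_A$ and $e_C$ are opposite sides of $B$ (the intended configuration arising from neighboring strips) or adjacent ones.
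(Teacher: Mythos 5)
Your height-function formulation of the ``closest point'' condition is sound and, for the interaction between a new edge and the edges of $P$, it is actually cleaner than the paper's treatment: since every edge of $P$ has both endpoints in $S_B$, affinity of $h_A$ on $B$ confines $P$ to the half-plane $h_A \geq h_A(x_A)$, while the new edge's portion in $B$ satisfies $h_A \leq h_A(x_A)$ with equality only at $x_A$, and this works uniformly for every orientation of $A$ relative to $B$. The paper argues the same thing coordinate-wise (``no endpoint of an edge of $P$ can lie to the left of $x_A$''), case by case.

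The gap is in your last step, the crossing between $\{x_A,y_A\}$ and $\{x_C,y_C\}$ when $e_A$ and $e_C$ are \emph{adjacent} (perpendicular) sides of $B$. There the claim that ``the same height argument with $h_A$ shows that any intersection must lie at the level $h_A = h_A(x_A) = h_A(x_C)$'' is false: $h_A$ is not bounded below by $h_A(x_C)$ along the portion of $\{x_C,y_C\}$ inside $B$, because $y_C$ lies in $C$, not in $S_B$, and the segment from $x_C$ to $y_C$ can climb toward the side $e_A$ as it heads for $e_C$ (take $A$ above $B$, $C$ to the right, $x_C$ near the bottom-right of $B$ and $y_C$ near the top of $C$; then $h_A$ at the exit point of $\{x_C,y_C\}$ can be smaller than $h_A(x_A)$). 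The symmetric attempt with $h_C$ fails for the same reason. So neither one-dimensional height function localizes the putative intersection, and a genuinely two-dimensional argument is required; the paper supplies one by directing both edges away from $x_A$ and $x_C$ and comparing the boundary sides of $B$ through which they exit. (Equivalently: both edges are confined to the corner rectangle $[x(x_C),1]\times[y(x_A),1]$ where they could meet, and there one is a chord from its left side to its top side while the other is a chord from its bottom side to its right side, so they cut off opposite corners and cannot cross.) This adjacent-sides case is not optional for the paper: it is exactly the configuration used to join $P_1$ in $C_1$ to the paths in $C_3$ and $C_4$ in Theorem~\ref{thm:averagecase}, so deferring it to ``a brief case distinction'' leaves the essential half of the lemma unproved. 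Your handling of the opposite-sides case and of the degeneracies (collinearity, shared endpoints) is otherwise fine.
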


\begin{proof}
    Without loss of generality, we assume the regions $A$, $B$ and $C$ are aligned at the horizontal axis.
    Moreover, we assume (again without loss of generality) that $C$ borders $B$ to the right. This leaves
    three cases to examine for $A$: it may border $B$ to the left, to the bottom, or to the top. The latter
    two cases are identical, so we examine only the first two.

    Suppose $A$ borders $B$ to the left. Let $y_A  \in S_A$, and suppose we extend $P$ by
    connecting $y_A$ to $x_A$. Because $P$ is noncrossing, we need only check whether the edge
    $e = \{y_A, x_A\}$ intersects any edge of $P$. Suppose such an edge, say $e^*$, exists. Then this implies
    that one endpoint of $e^*$ must lie to the left of $x_A$, which is a contradiction.

    Now consider the border of $B$ with $C$, and extend the path by adding the edge $f = \{y_C, x_C\}$ for some
    $y_C \in S_C$. By the same reasoning, $f$ cannot intersect any edge of $P$. It remains to 
    check whether $e$ can intersect $f$. Observe that $L(e)$ lies entirely to the left of $x_A$,
    which lies to the left of $x_C$, which in turn lies to the left of the entirety of $L(f)$.
    Thus, $e$ and $f$ cannot intersect.

    Now suppose $A$ borders $B$ to the top. The argument from the previous case now fails,
    since we can no longer order the extending edges from left to right.
    Suppose that $e$ intersects $f$ in the point $q \in B$. We add a direction to $e$ and $f$, taking
    $x_A$ and $x_C$ as their origin. Since $x_C$ lies below $x_A$, we know that $e$ must lie
    below $f$ after passing through $q$, and stay above $f$ until it reaches its endpoint.
    Since the endpoints of $e$ and $f$ lie outside of $B$, this implies that either the point where
    $e$ exits $B$ lies below the point where $f$ exits $B$, or both edges exit
    $B$ in the same border of $B$. This is a contradiction, and so we are done. \qed
\end{proof}

The proof of \Cref{lemma:path_extension}
fails when the points $x_A$ and $x_C$ are identical, or equivalently,
when the point set contained in $B$ is not nice for $B$.
The following lemma shows that, provided $B$ contains enough
randomly placed points, this is unlikely to occur.

\begin{lemma}\label{lemma:nice}
    Let $A$ be a rectangular region in the plane. Let $X$ be a set of $n \geq 2$
    points placed uniformly at random in $A$. 
    The probability that $X$ is not nice for $A$ is
    at most $6/n$.
\end{lemma}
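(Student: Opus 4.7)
The plan is to apply a union bound over the $\binom{4}{2}=6$ unordered pairs of distinct edges of $A$. For distinct edges $e,f$, let $\mathcal{E}_{e,f}$ denote the event that $x_e = x_f$; then
\[
\prob(X \text{ not nice for } A) \le \sum_{\{e,f\}} \prob(\mathcal{E}_{e,f}),
\]
and it suffices to show that each summand is at most $1/n$.

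I would then split the six pairs into two kinds. For the two pairs of \emph{opposite} edges of $A$ (say the left and right edges, after rotating so that $A$ is axis-aligned), the closest point to one edge is characterised by having the minimum $x$-coordinate in $X$, while the closest point to the opposite edge is characterised by having the maximum $x$-coordinate. Since $n \ge 2$ and the uniform distribution places points in distinct locations almost surely, these cannot coincide, so $\prob(\mathcal{E}_{e,f}) = 0$ in this case.

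For the four pairs of \emph{adjacent} edges, I would fix WLOG $e$ to be the left edge ($x$-minimal) and $f$ the bottom edge ($y$-minimal). Then $\mathcal{E}_{e,f}$ is the event that some point of $X$ simultaneously attains the minimum $x$-coordinate and the minimum $y$-coordinate. Because the $x$- and $y$-coordinates of uniform points in a rectangle are independent, and because by symmetry each of the $n$ points is equally likely to be the $x$-minimum and equally likely to be the $y$-minimum, the probability that a fixed point $x_i$ achieves both minima is $(1/n)\cdot(1/n) = 1/n^2$. Summing over the $n$ points gives $\prob(\mathcal{E}_{e,f}) = 1/n$. Combining the two cases, the union bound yields $\prob(X \text{ not nice for } A) \le 4/n \le 6/n$, which suffices.

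The only subtlety is justifying the symmetry/independence step cleanly: one must note that because the joint density factorises across coordinates, ranking points by $x$-coordinate and ranking by $y$-coordinate yield independent uniformly random permutations of $\{1,\dots,n\}$, so the ranks of any single point in the two orderings are independent and each uniform. There is no major obstacle; the proof is essentially a careful union bound.
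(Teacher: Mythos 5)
Your proof is correct and follows essentially the same route as the paper: a union bound over the six pairs of edges, with opposite pairs contributing probability $0$ and adjacent pairs contributing $1/n$ via the independence of the $x$- and $y$-coordinates. You even obtain the slightly sharper bound $4/n$, whereas the paper simply bounds all six terms by $1/n$.
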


\begin{proof}
    Let $e_i$, $i \in [4]$, be any edge of $A$. 
    Let $E_{ij}$ denote the event that $x_{e_i} = x_{e_j}$ for $i \neq j$.
    Then we seek to bound $\prob(\cup_{i=1}^4 \cup_{j = i+1}^4 E_{ij})$
    from above.
    
    Without loss of generality,
    we assume that $e_i$ lies along the $x$-axis, and that $A \setminus e_i$
    lies above $e_i$, i.e., $e_i$ is the bottom edge of $A$.

    Let $e_j$ be a different edge of $A$. If $e_j$ is the top edge
    of $A$, then the event $E_{ij}$ occurs with probability 0,
    since it requires all points to lie on the same horizontal line.

    Suppose now that $e_j$ is the right edge of $A$, and let
    $y \in X \setminus \{x_{e_i}\}$. Observe that the horizontal coordinates
    of all points in $X$ are independent uniform random variables.
    Thus, the probability that $x_{e_i}$ is the point with
    the largest horizontal coordinate is $1/n$ by symmetry.
    Therefore,
    $\prob(E_{ij}) = 1/n$.

    To conclude, we apply a union bound to obtain
    \[
        \prob\left(\cup_{1 \leq i < j \leq 4} E_{ij}\right)
            \leq \sum_{1 \leq i < j \leq 4} \prob\left(E_{ij}\right)
            = \frac{6}{n}
    \]
    as claimed. \qed
\end{proof}

The final lemma we require follows from elementary calculus and probability theory.

\begin{lemma}\label{lemma:uniform_expect}
    Let $X$ and $Y$ be uniformly distributed over $[a, b]$. Then
    \[
        \expect(|X - Y|) = \frac{b-a}{3}.
    \]
\end{lemma}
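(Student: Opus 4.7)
The plan is to reduce to the unit interval by an affine change of variables and then evaluate a short double integral. First I would write $X = a + (b-a)U$ and $Y = a + (b-a)V$ where $U$ and $V$ are independent uniform random variables on $[0,1]$. Since $|X - Y| = (b-a)\,|U - V|$, linearity of expectation pulls out the factor $(b-a)$, reducing the claim to showing that $\expect(|U - V|) = 1/3$.

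For the unit-interval case, I would compute directly from the joint density (the indicator of $[0,1]^2$):
\[
    \expect(|U - V|) = \int_0^1 \int_0^1 |u - v|\,\dd u\,\dd v.
\]
Exploiting the symmetry $(u,v) \mapsto (v,u)$, this equals $2\int_0^1 \int_0^u (u - v)\,\dd v\,\dd u$, which is the routine iterated integral $2\int_0^1 u^2/2\,\dd u = 1/3$. Multiplying by the prefactor $(b-a)$ recovered from the affine reduction then yields the stated value $(b-a)/3$.

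There is no real obstacle here: the computation is standard in elementary probability, and the only thing to be careful about is that the scaling factor $(b-a)$ passes correctly through the absolute value in the affine reduction. One could alternatively derive the result from the fact that $|U - V|$ has density $2(1-t)$ on $[0,1]$ (the triangular distribution), but the direct double-integral approach is the shortest path to the conclusion.
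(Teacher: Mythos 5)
Your proof is correct and takes essentially the same route as the paper: an affine reduction to the unit interval followed by direct evaluation of $\int_0^1\int_0^1 |u-v|\,\dd u\,\dd v = 1/3$ (the paper performs the substitution inside the integral rather than at the level of the random variables, and leaves the final integral's evaluation implicit, but the argument is the same). Note that, like the paper, you rely on independence of $X$ and $Y$, which is assumed in the paper's proof though not stated in the lemma itself.
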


\begin{proof}
    As $X$ and $Y$ are independent, we can compute the required quantity directly
    using their joint distribution. Let $f_X$ and $f_Y$ denote their respective
    probability density functions. We have
    \begin{align*}
        \expect(|X-Y|) &= \int_a^b\int_a^b |x - y| f_X(x)f_Y(y) \dd x \dd y\\
            &= \frac{1}{(b-a)^2}\int_a^b\int_a^b |x - y| \dd x \dd y.
    \end{align*}
    Substituting $x = a + (b-a)\bar x$ and $y = a + (b-a)\bar y$, the
    integral reduces to $(b-a)^3\int_0^1\int_0^1 |\bar x - \bar y|\dd \bar x \dd \bar y
    = (b-a)^3/3$, completing the proof. \qed
\end{proof}

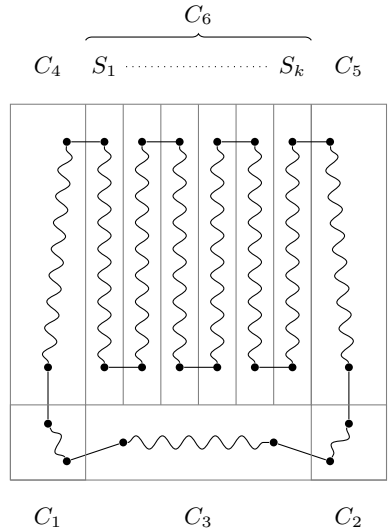
\begin{figure}
    \centering
    \begin{tikzpicture}[scale=5]
        \pgfmathsetmacro\k{6};

        \pgfmathsetmacro\c{0.2};
        
        \node[] at (0.5*\c, 1.1) {$C_4$};
        \node[] at (1.0-0.5*\c, 1.1) {$C_5$};
        \draw[
            decoration={
        brace,
    }, decorate] ((\c,1.17) -- (1-\c,1.17) node[sloped, pos=0.5, above=0.1cm] {$C_6$};

        \node [] at (0.5*\c,-0.1) {$C_1$};
        \node [] at (1-0.5*\c,-0.1) {$C_2$};
        \node [] at (0.5,-0.1) {$C_3$};
        
        \vertex[] (xr) at (0.75*\c, 0.25*\c) {};
        \vertex[] (xt) at (0.5*\c, 0.75*\c) {};
        \vertex[] (yt) at (1-0.5*\c, 0.75*\c) {};
        \vertex[] (yl) at (1-0.75*\c, 0.25*\c) {};

        \vertex[] (z) at (0.5*\c, 1.5*\c) {};
        \vertex[] (w) at (0.75*\c, 1-.5*\c) {};
        \vertex[] (p) at (1.5*\c, 0.5*\c) {};
        \draw[snake it] (xr) -- (xt);
        \draw (xr) -- (p);
        \draw (xt) -- (z);
        \draw[snake it] (z) -- (w);
        
        \vertex[] (s) at (1-0.5*\c, 1.5*\c) {};
        \vertex[] (t) at (1-0.75*\c, 1-.5*\c) {};
        \vertex[] (q) at (1-1.5*\c, 0.5*\c) {};
        \draw[snake it] (yt) -- (yl);
        \draw (yt) -- (s);
        \draw (yl) -- (q);
        \draw[snake it] (s) -- (t);
        
        \draw[snake it] (p) -- (q);
        
        \draw[gray] (0,0) -- (1,0) -- (1,1) -- (0,1) -- (0,0);
        
        \draw[gray] (0, 0) rectangle ++ (\c,\c);
        \draw[gray] (1-\c, 0) rectangle (1,\c);
        \draw[gray] (\c, \c) -- (1-\c, \c);
        \draw[gray] (\c, \c) -- (\c, 1);
        \draw[gray] (1-\c, \c) -- (1-\c, 1);

        \pgfmathsetmacro\m{\k-1};
        \pgfmathsetmacro\n{\k+1};
        \foreach \i in {1,...,\k} {

            \pgfmathsetmacro\x{\c+\i*(1-2*\c)/(\k)};
            \pgfmathsetmacro\y{\c+(\i+1)*(1-2*\c)/(\k)};
            \ifthenelse{\i<\n}{
                \draw[gray] (\x,1) -- (\x,\c);
            }{};
            
            \vertex[] (a) at ({\x-0.5*(1-2*\c)/\k},1.5*\c) {}; 
            \vertex[] (b) at ({\x-0.5*(1-2*\c)/\k},1-0.5*\c) {}; 

            \draw[snake it] (a) -- (b);
            \ifodd\i{\draw (a) -- ({\y-0.5*(1-2*\c)/\k},1.5*\c);}\else
            {\ifthenelse{\i<\n}{\draw (b) -- ({\y-0.5*(1-2*\c)/\k},1-0.5*\c);}{}}\fi
            
            \ifthenelse{\i=1}{\draw(w) -- (b); \node[] (s1) at (0.5*\x+0.5*\c,1.1) {$S_1$};}{};
            \ifthenelse{\i=\n}{\draw(t) -- (b);}{};
            
            \pgfmathsetmacro\textx{0.5*\x + 0.5*(1-\c)};
            \ifthenelse{\i=\m}{\node[] (sk) at (\textx,1.1) {$S_k$};}{};
        }

    \draw[dotted] (s1) -- (sk);
      
    \end{tikzpicture}
    \caption{The construction used in \Cref{thm:averagecase}. Wavy lines
    represent Hamiltonian paths within the rectangular region they lie in, while
    straight lines represent edges.}
    \label{fig:averagecase}
\end{figure}

We are now in a position to prove \Cref{thm:averagecase}. See \Cref{fig:averagecase} for
a sketch of the construction we use in the proof.

\begin{theorem}\label{thm:averagecase}
    Suppose a TSP instance is formed by placing $n$ points uniformly at random in the 
    unit square.
    Then the expected value of the ratio of the worst local optimum of X-opt
    and the optimal tour on this instance is $\Omega(\sqrt{n})$.
\end{theorem}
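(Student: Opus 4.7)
The strategy is to build, with probability $1-o(1)$, a noncrossing Hamiltonian tour of length $\Omega(n)$ on the random point set, and combine this with the classical fact that the optimal TSP tour through $n$ uniform random points in the unit square has length $\Theta(\sqrt n)$ (concentrated around its mean), thereby obtaining an expected ratio of $\Omega(\sqrt n)$. Following the layout of \Cref{fig:averagecase}, I partition $[0,1]^2$ into $k = \Theta(\sqrt n)$ tall thin vertical strips $S_1,\ldots,S_k$ of width $\Theta(1/\sqrt n)$ and height $\Theta(1)$ in the interior, together with a constant number of constant-area auxiliary regions in the corners, sides, top, and bottom that will be used to close the tour. By \Cref{lemma:nonempty} and a union bound, with probability $1-o(1)$ every strip contains $\Theta(\sqrt n)$ points and every auxiliary region contains $\Omega(1)$ points; by \Cref{lemma:nice} and a union bound, every strip and every auxiliary region is nice for itself with probability $1-o(1)$. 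Call the intersection of these events $\mathcal{E}$, so that $\prob(\mathcal{E}) = 1-o(1)$.

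Conditional on $\mathcal{E}$, I construct inside each strip $S_i$ a noncrossing Hamiltonian path through $S_i \cap X$ with endpoints at the topmost and bottommost points of $S_i$ and with expected length $\Omega(\sqrt n)$. The proposed construction splits $S_i$ into an upper and a lower half, sorts the points in each half by $x$-coordinate, and interleaves the two sorted sequences into a zigzag so that essentially every edge spans from the upper half to the lower half; the topmost and bottommost points are spliced in as the two endpoints. Noncrossingness is to be verified by exploiting the $x$-monotonicity of the two sub-sequences, which places any pair of non-adjacent edges in essentially disjoint vertical sub-strips of $S_i$, while the expected length of each edge is a positive constant by \Cref{lemma:uniform_expect} applied to the $y$-coordinates within each half. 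Summing over the $\Theta(\sqrt n)$ edges in the strip then yields expected length $\Omega(\sqrt n)$ per strip, hence $\Omega(n)$ over all $k$ strips. I next stitch the strip paths at alternating top and bottom endpoints via \Cref{lemma:path_extension}, whose hypotheses are met by niceness, and close the resulting long path into a noncrossing Hamiltonian tour through all of $X$ by routing through the auxiliary regions using a constant number of additional edges of bounded length.

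Combining the above, the constructed noncrossing tour has expected length $\Omega(n)$ on the event $\mathcal{E}$, while with probability $1-o(1)$ the optimal tour has length $O(\sqrt n)$ by Beardwood--Halton--Hammersley together with a concentration bound. Intersecting these two high-probability events gives ratio $\Omega(\sqrt n)$ on an event of probability $1-o(1)$, which implies the same lower bound in expectation. The step I expect to be the main obstacle is the strip-path construction: simultaneously controlling noncrossingness, the fixed topmost/bottommost endpoints, and expected length $\Omega(\sqrt n)$ is delicate, because the generic noncrossing path produced by \Cref{lemma:hamiltonian_path} provides no length guarantee at all (its uncrossing procedure can only shorten the path), so the interleaved zigzag must be tailored and its noncrossingness verified by a careful case analysis on the $x$-coordinates of the upper and lower halves, paying particular attention to the two edges incident to the top and bottom endpoints.
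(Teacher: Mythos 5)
Your high-level plan (strips, stitching via \Cref{lemma:path_extension}, closing through auxiliary corner regions, and comparing against the $O(\sqrt n)$ optimal tour) matches the paper's, but your decomposition is genuinely different: you use $\Theta(\sqrt n)$ strips of width $\Theta(1/\sqrt n)$, each holding $\Theta(\sqrt n)$ points, and therefore you need a per-strip noncrossing path of length $\Omega(\sqrt n)$. The paper instead uses $\Theta(n)$ strips of width $\Theta(1/n)$, each holding only $O(1)$ points in expectation; there the generic noncrossing path of \Cref{lemma:hamiltonian_path} suffices, because a Hamiltonian path through any two points of the strip is, by the triangle inequality, at least as long as their vertical separation, which has expectation $(1-c)/3$ by \Cref{lemma:uniform_expect}. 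This is precisely how the paper evades the difficulty you correctly identify, namely that \Cref{lemma:hamiltonian_path} gives no length guarantee.

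The step you flag as delicate is in fact where your argument breaks. The interleaved zigzag $u_1, d_1, u_2, d_2, \ldots$ is \emph{not} noncrossing in general, and the justification you offer --- that $x$-monotonicity of the two subsequences places non-adjacent edges in essentially disjoint vertical sub-strips --- is false: the edge $\{u_j, d_j\}$ occupies the $x$-interval between $x_{u_j}$ and $x_{d_j}$, and since the upper and lower sequences are sorted \emph{independently}, these intervals can overlap arbitrarily. Concretely, in a strip rescaled to $[0,1]^2$ with upper half $y \ge 1/2$, take $u_1 = (0.9, 0.9)$, $d_1 = (0.1, 0.1)$, $u_2 = (0.95, 0.55)$, $d_2 = (0.2, 0.45)$: the sequences are each $x$-sorted, yet the non-adjacent edges $\{u_1,d_1\}$ (the line $y=x$) and $\{u_2,d_2\}$ cross near $y \approx 0.5$. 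Such configurations occur with constant probability per strip, so the construction as described does not produce a locally optimal tour. The gap is repairable within your framework: drop the upper/lower split and simply sort \emph{all} points of the strip by $x$-coordinate and connect consecutive ones. The resulting path is $x$-monotone, hence noncrossing, its endpoints are the leftmost and rightmost points (so the paper's left-to-right stitching applies verbatim), and since the $x$- and $y$-coordinates of uniform points in a rectangle are independent, the $y$-coordinates of the $x$-sorted points remain i.i.d.\ uniform over the strip's height; \Cref{lemma:uniform_expect} then gives each of the $\Theta(\sqrt n)$ edges expected length $\Omega(1)$, recovering $\Omega(\sqrt n)$ per strip. With that replacement your argument goes through and is a legitimate alternative to the paper's thin-strip argument, trading the paper's ``two points suffice'' trick for a concrete long monotone path.
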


\begin{proof}
    We begin by partitioning the unit square into six rectangular regions.
    Let $c \in (0, 1)$ be a constant, to be fixed later.
    Let $C_1$ denote the square region with opposite points
    $(0, 0)$ and $(c, c)$, and let $C_2$ similarly denote the square region with
    corner points $(1-c, 0)$ and $(1, c)$.
    Next, let $C_3$ be the rectangular region
    with opposite corners $(c, 0)$ and $(1-c, c)$.
    The region $C_4$ denotes the rectangular region
    with corner points $(0, c)$ and $(c, 1)$, while 
    $C_5$ denotes the region
    with corners $(1-c, c)$ and $(1, 1)$.
    Finally, the region $C_6$ denotes the rectangle with
    corners $(c, c)$ and $(1-c, 1)$.

    We divide $C_6$ into vertical strips of width $\alpha \cdot \frac{1-2c}{n}$
    for some $\alpha > 0$ to be fixed later, so that there are
    $k = \lfloor n/\alpha \rfloor$ strips in total. We label the strips
    from left to right as $\{S_i\}_{i=1}^k$.

    Let $X$ be a set of $n$ points placed uniformly at random in $[0, 1]^2$.
    Note that, by \Cref{lemma:nonempty}, the probability that
    any of $X_{C_i}  := C_i \cap X$ for $i \in [5]$ contains fewer than 31 points
    is at most $5e^{-\Omega(n)}$. Hence, we assume for the remainder
    of the proof that these regions contain at least 31 points.
    The possibility that this is not true reduces the expected
    tour length by a factor of at most $1-e^{-\Omega(n)}$,
    which does not affect the result.

    Observe that
    each $x \in X_{C_i}$ is uniformly distributed over $C_i$. Since we assume
    that each $X_{C_i}$ contains at least 31 points, we see by \Cref{lemma:nice} that the probability that
    $X_{C_i}$ is not nice for $X$ is at most $6 / 31$.
    By a union bound, the probability that $X_{C_i}$ is not nice
    for $C_i$ for any $i \in [5]$ is at most
    $5\cdot \frac{6}{31} = \frac{30}{31}$. Hence, we assume for the remainder
    that each $X_{C_i}$ is nice for $C_i$, $i \in [5]$. The possibility that this is not
    true reduces the expected tour length by at most a factor of $\frac{1}{31}$, which
    does not affect the result.

    Consider strip $S_i$. Let $l_i$ be the point in $S_i \cap X$ with
    the smallest $x$-coordinate, the \emph{leftmost} point, provided it exists. Similarly,
    $r_i$ denotes the point in $S_i \cap X$ with the largest $x$-coordinate,
    or the \emph{rightmost} point. Since the probability
    that any three points in $X$ are collinear is 0,
    we can use \Cref{lemma:hamiltonian_path} to establish the existence of
    a noncrossing Hamiltonian path through $S_i$ with endpoints $l_i$ and $r_i$. Let
    $P_{S_i}$ be such a path.

    After forming the paths $P_{S_i}$ for all $i \in [k]$, we connect 
    $P_{S_i}$ to $P_{S_{i+1}}$ for $i \in [k-1]$. To do this, we
    simply connect
    $r_i$ to $l_{i+1}$. The result is a Hamiltonian path $P_6$ through $C_6$,
    which contains the paths $P_i$ as subpaths.
    By \Cref{lemma:path_extension}, we know that $P_6$ is noncrossing.

    Next, we form noncrossing Hamiltonian paths through the remaining regions.
    For the region $C_4$, we let the endpoints of the path be the rightmost and bottom-most
    points in $X \cap C_4$. For $C_5$, the endpoints are the leftmost and bottom-most points.
    For $C_1$, we take the top-most and rightmost points, for $C_2$ the leftmost and top-most,
    and for $C_3$ the leftmost and rightmost. We label the path through region $C_i$
    by $P_i$, $i \in [6]$. Observe that these endpoints are distinct in all cases, since
    we assume that $X_{C_i}$ is nice for $C_i$ for each $i \in [5]$.
    
    We now connect these paths as follows.
    Let $C_i$ be any of the regions. If the region shares a border with $C_j$, excluding the bottom
    border of $C_6$, then
    we connect the points from $C_i$ and $C_j$ closest to the border. Observe that this is exactly the process
    we used to form $P_6$. Again using \Cref{lemma:hamiltonian_path}, the tour $T$ so formed is noncrossing.

    To bound the length of $T$ from below, we consider the length of the paths $P_{S_i}$, $i \in [k]$.
    The sum of these lengths is clearly a lower bound for $\ell(T)$. We thus have
    \[
        \expect(\ell(T)) \geq \sum_{i=1}^k \expect(\ell(P_{S_i})),
    \]
    by linearity of expectation. Moreover, observe that
    $\ell(P_{S_i}) = 0$ if fewer than 2 points are placed in $S_i$. Thus, we find by
    the law of total expectation
    \[
        \expect(\ell(P_{S_i})) = \expect(\ell(P_{S_i})\given |S_i \cap X| \geq 2)
            \cdot \prob(|S_i \cap X| \geq 2).
    \]
    Assume the event $|S_i \cap X| \geq 2$ occurs. Let $x, y$ be any two points in $S_i \cap X$.
    Then by the triangle inequality, $\ell(P_i) \geq d(x, y) \geq d_v(x, y)$, where $d_v$ denotes the vertical distance
    between $x$ and $y$. Observe that the vertical coordinates of $x$ and $y$ are independent
    and uniformly distributed over $[c, 1]$. This implies by \Cref{lemma:uniform_expect} that
    \[
        \expect(\ell(P_i) \given |S_i \cap X| \geq 2)
            \geq \frac{1-c}{3}.
    \]
    Using \Cref{lemma:nonempty} to bound $\prob(|S_i \cap X| \geq 2)$ from below, we have
    \[
        \expect(\ell(T)) \geq \left\lfloor \frac{n}{\alpha}\right\rfloor \cdot 
            \frac{1-c}{3}
            \cdot \left(1 - e^{1-\frac{1}{2}\alpha(1-2c)(1-c)} \cdot e^{-\frac{1}{2\alpha(1-2c)(1-c)}}\right),
    \]
    where we use the fact that $\mathrm{area}(S_i) = \alpha(1-c)(1-2c)/n$.
    
    It remains to fix values for $c$ and $\alpha$ such  that this expectation is nontrivial;
    for instance, $\alpha = 10$ and $c = 0.1$ suffice. We then find
    $\expect(\ell(T)) = \Omega(n)$. Since there exists a tour of length $O(\sqrt{n})$ in the Euclidean
    TSP with high probability \cite{friezeProbabilisticAnalysisTSP2007}, we are done. \qed
\end{proof}

\section{Practical Performance of Uncrossing Tours}\label{sec:numerical}

In this section, we show that in practical instances there is a large gap
with the results suggested by \Cref{thm:averagecase}. We generate
instances with $n$ points sampled from the uniform distribution
over $[0, 1]^2$,
and run X-opt on these instances. As a starting tour, we
pick a tour from the uniform distribution on all tours.
We compute the lengths of the locally optimal tours obtained
from our implementation of X-opt,
and average them for each fixed value of $n$ we evaluate.
We consider the simplest possible pivot rule: starting from
an arbitrary edge $e$ in the tour, we check whether $e$ intersects
with any other edge, performing an exchange when we find the first such edge.
If we do not find such an intersecting edge, we move on to the next edge
in the tour and repeat the process. By ``next'', we mean that we order
the edges of a tour according to the permutation on the vertices by which we represent
the tour.

Since the optimal tour length
is $\Theta(\sqrt{n})$ with high probability \cite{friezeProbabilisticAnalysisTSP2007},
we compare the length of the tours
we obtain
with this function. Their ratio
then serves as a proxy for the approximation ratio of X-opt.
We perform this procedure for
$n \in \{100, \ldots, 1000, 2000,\ldots, 10000\}$. For each value of $n$, we take
$N = 16,000$ samples. The results are
shown in \Cref{fig:numerical}. To the precision we are able to obtain, we
cannot distinguish the approximation ratio from constant.

\begin{figure}
    \centering
    \includegraphics[width=\textwidth]{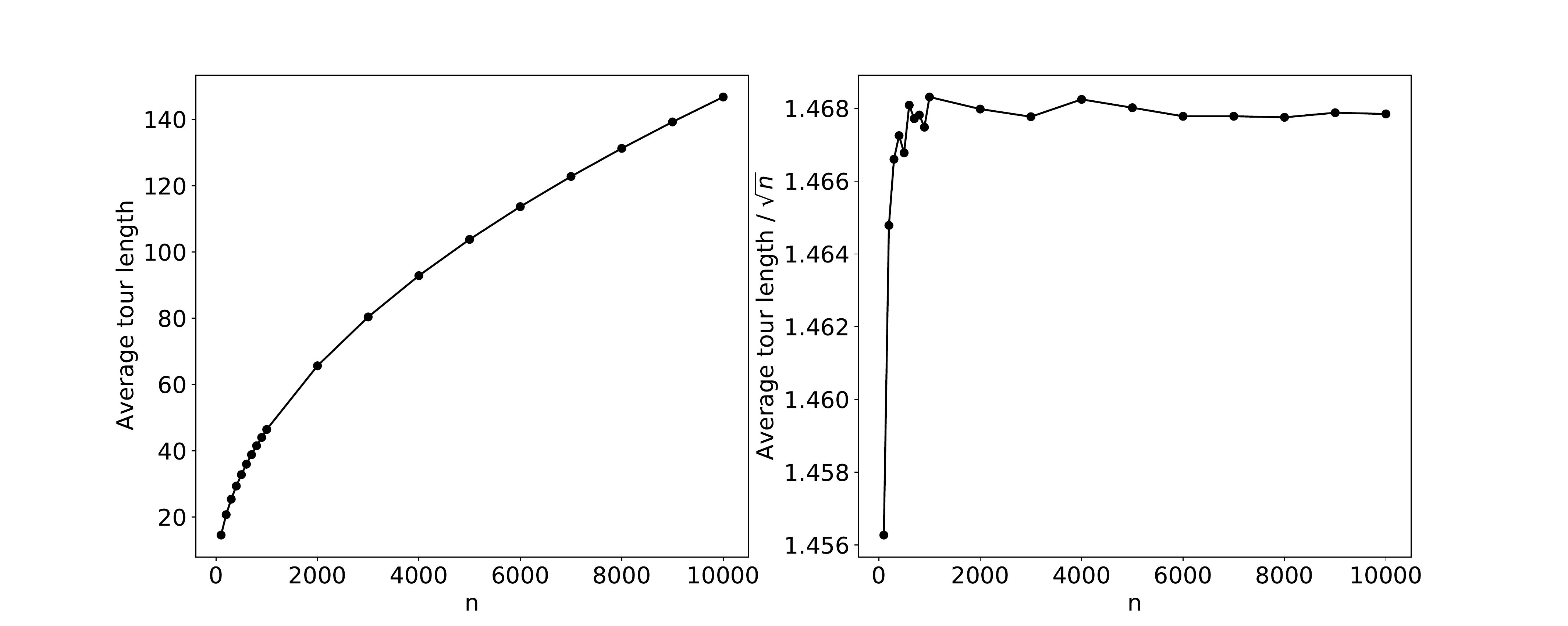}
    \caption{Numerical evaluation of the average-case performance of X-opt.}
    \label{fig:numerical}
\end{figure}

\section{Discussion}

Although the results we presented in \Cref{thm:worstcase}
and \Cref{thm:averagecase}
are rather negative for X-opt, the numerical experiments
of \Cref{sec:numerical} paint a much more optimistic picture.
The heuristic appears to be much more efficient in practice than our
lower bounds suggest. Indeed, while \Cref{thm:averagecase} suggests
an approximation ratio of $\Omega(\sqrt{n})$, the numerical experiments
in \Cref{sec:numerical} suggest a constant approximation ratio.

One possible explanation for this discrepancy is that
we compare the optimal solution on
any instance to local optima specifically constructed to be bad. This is
a rather standard approach, and it is not surprising that
it gives pessimistic results. However, the results in this
case are especially pessimistic, considering that we
can show a tight lower bound for the expected tour length
in the average case.

We consider this to be an indication that this approach
is incapable of explaining the practical approximation performance
of local search heuristics. In order to more closely model the
true behavior of heuristics, it seems one must analyze the landscape
of local optima, and the probability of reaching different local optima.
We stress that this discrepancy cannot be resolved by other
standard methods of probabilistic analysis. In particular, smoothed
analysis \cite{spielmanSmoothedAnalysisAlgorithms2004} cannot help, since the smoothed
approximation ratio of an algorithm is bounded from below by the average-case
approximation ratio.

\bibliographystyle{splncs04}
\bibliography{bibliography}

\end{document}